\documentclass[12pt]{article}

\usepackage[T1]{fontenc}
\usepackage{lmodern}
\usepackage[margin=1in]{geometry}
\usepackage{amsmath,amssymb,amsthm,mathtools}
\usepackage{booktabs}
\usepackage{graphicx}
\usepackage{microtype}
\usepackage{enumitem}
\usepackage{xcolor}
\usepackage{hyperref}
\hypersetup{
  colorlinks=true,
  linkcolor=blue!45!black,
  citecolor=blue!45!black,
  urlcolor=blue!45!black
}
\usepackage{tikz}
\usetikzlibrary{positioning,arrows.meta}

\newcommand{\R}{\mathbb R}

\newcommand{\PGF}{\chi}
\newcommand{\Hsh}{H}
\newcommand{\Sfac}{S_{\mathrm{fac}}}

\newcommand{\supp}{\operatorname{supp}}
\newcommand{\Res}{\operatorname{Res}}

\newtheorem{theorem}{Theorem}[section]
\newtheorem{proposition}[theorem]{Proposition}
\newtheorem{corollary}[theorem]{Corollary}

\theoremstyle{definition}
\newtheorem{definition}[theorem]{Definition}
\newtheorem{example}[theorem]{Example}

\theoremstyle{remark}
\newtheorem{remark}[theorem]{Remark}

\title{On Factorizing Aggregate Counting Distributions into Independent Latent Processes}

\author{
Israel Klich\\[0.5em]
\small Department of Physics, University of Virginia, Charlottesville, Virginia 22904, USA\\
\small Max Planck Institute for the Physics of Complex Systems, Dresden, Germany\\
\small \texttt{ik3j@virginia.edu}
}

\date{}

\begin{document}

\maketitle

\begin{abstract}
Given only the probability distribution of an aggregate counting variable, what independent latent counting processes are compatible with the observation? Equivalently, when does a probability-generating function admit a factorization into normalized polynomials with nonnegative coefficients? We develop a mathematical theory of such positive factorizations.

We introduce the positive factorization poset, whose elements are all positive factorizations ordered by refinement, and define the factorization entropy, measuring the maximal latent Shannon entropy compatible with the observed distribution. We prove a sharp entropy inequality, characterize the equality case by injectivity of the latent addition map, show that entropy optimization may be restricted to maximal atomizations, and exhibit examples where distinct maximal atomizations have different entropy.

We further establish support-based obstructions to positive factorization, characterize the real-rooted and Hurwitz-stable sectors, prove a local stability theorem for coprime factorizations, determine exactly the factorable regions in degrees two and three, obtain an exact quartic Hurwitz volume, and investigate the geometry of the factorable region inside the probability simplex through exact calculations and Monte Carlo experiments. These results identify the positive factorization poset as a natural algebraic object associated with probability-generating functions and provide a framework for studying latent independent structure in aggregate counting statistics.
\end{abstract}

\noindent\textbf{Keywords:}
probability-generating function; positive polynomial; Shannon entropy; Hurwitz stability

\medskip

\noindent\textbf{2020 Mathematics Subject Classification:}
60E10, 26C10, 94A17, 14P10

\section{Introduction}
Probability-generating functions provide a compact representation of
discrete probability distributions and play a central role in probability,
statistics, combinatorics, and statistical physics.  Whenever only an
aggregate counting observable (i.e. non-negative integer valued measurement) is experimentally accessible, the
probability-generating function becomes the primary object describing the
measurement:
\begin{equation}
    \PGF_p(z)=\sum_{n=0}^{N}p_nz^n,
    \qquad p_n\ge0,
    \qquad \PGF_p(1)=1.
    \label{eq:pgf}
\end{equation}
If $X=X_1+\cdots+X_m$ is a sum of independent nonnegative integer-valued random variables, then
\begin{equation}
    \PGF_X(z)=\prod_{j=1}^m\PGF_{X_j}(z).
    \label{eq:conv-product}
\end{equation}
Conversely, every factorization of \eqref{eq:pgf} into normalized polynomials with nonnegative coefficients defines such an independent latent decomposition.  The elementary question addressed here is therefore:
\begin{quote}
Which independent latent counting processes are compatible with a given observed distribution?
\end{quote}

Positive coefficient factorizations have a substantial history in ligand-binding theory.  Wyman introduced binding polynomials as generating functions for the numbers of occupied sites, and Briggs formalized the notions of a positive polynomial, positive factorization, and $p$-irreducibility \cite{Wyman1965,Briggs1985}.  In that setting a positive factorization corresponds to a decomposition of binding sites into independent modules.  Briggs obtained complete root-geometric classifications in degrees three and four and related irreducibility to cooperativity.  The same algebraic problem arises whenever only an aggregate count is measured and the elementary counting channels are unresolved.

\paragraph{Physical motivation}
The inverse problem described above appears naturally in several branches of
statistical physics in which only an aggregate counting observable is
experimentally accessible while the underlying microscopic counting events
remain unresolved.

One prominent example is the full counting statistics of electrons in
mesoscopic conductors.  There the experimentally measured object is the
probability distribution of the total charge transmitted during a fixed
measurement interval.  For noninteracting fermions the corresponding
probability-generating function factorizes into elementary Bernoulli factors $\chi(z)=\prod_{j=1}^N(1-\nu_j+\nu_jz)$, $0<\nu_j<1$,
reflecting the decomposition of transport into statistically independent
single-particle transmission events \cite{Levitov1996,Peschel2003,Klich2006}.  This structure is the origin for the relation
between charge counting statistics and entanglement entropy in free-fermion
systems 
\cite{KlichLevitov2009,KlichRefaelSilva2006,SongFlindtRachelKlichLeHur2011,AbanovIvanov2009}.
Beyond the free-fermion setting such a factorization generally ceases to
exist, motivating the question of how much independent latent structure can
still be inferred from the measured counting distribution alone.

A closely related inverse problem appears in modern photon-counting
experiments.  Transition-edge-sensor and multiplexed avalanche-photodiode
detectors accurately resolve the total number of detected photons while
generally providing no information about the temporal, spatial, spectral,
or Schmidt modes from which those photons originated.  Consequently, the
experiment again provides a single probability-generating function of the
total count.  Positive factorization of this generating function admits a
natural interpretation as a decomposition into statistically independent
latent optical counting processes, whereas positive irreducibility excludes
such an interpretation within the assumed counting model.  This viewpoint is
particularly relevant to multimode squeezed light, frequency-comb sources,
and photon-number-resolving detectors
\cite{Fitch2003,Sridhar2014,Nehra2017,Eaton2022}.


Rather than concentrating on any one of these applications individually,
our goal is to develop the common mathematical theory underlying them.
The principal new object introduced here is the positive factorization
poset associated with a probability-generating function. We call this the \emph{positive factorization poset}.  On this poset we define an information-theoretic functional,
\begin{equation}
    \Sfac(p)=\max_{\PGF_p=\prod_j\PGF_j}\sum_j \Hsh(\PGF_j),
    \label{eq:Sfac-intro}
\end{equation}
where $\Hsh(\PGF_j)$ is the Shannon entropy of the coefficient vector of the factor.  The excess
\begin{equation}
    \Sfac(p)-\Hsh(p)
\end{equation}
has an operational interpretation as the largest uncertainty about independent latent outcomes that remains after their sum has been observed.

Our main contributions are as follows.
\begin{enumerate}[leftmargin=2em]
\item We introduce the positive factorization poset associated with a
probability-generating function.  We show that maximal positive
atomizations need not be unique and, in general, need not admit a
common refinement.

\item We introduce factorization entropy as an information-theoretic
functional on the positive factorization poset.  We prove an entropy
inequality, characterize its equality condition by injectivity of the
latent addition map, and show that the entropy optimization may be
restricted to maximal atomizations.  We further show that different
maximal atomizations may carry different entropy.

\item We establish several structural results for positive probability
polynomials, including support obstructions, complete
characterizations of the real-rooted and Hurwitz-stable sectors, and a
local stability theorem for coprime positive factorizations.

\item We determine exactly the factorable regions in degrees two and
three, investigate arithmetic families arising from mixed-radix
decompositions, and distinguish equality cases of the entropy
inequality from entropy-maximizing factorizations.

\item We investigate the geometry of the factorable region inside the
probability simplex, provide Monte Carlo estimates through degree
fourteen, and formulate several open structural, asymptotic and
algorithmic questions.
\end{enumerate}

The remainder of the paper is organized as follows.
Section~2 introduces positive probability polynomials and positive
factorizations.
Section~3 develops the factorization poset and defines factorization
entropy.
Sections~4--6 establish the principal structural theorems and investigate
the geometry of the factorable region.
Section~7 discusses random probability polynomials, and numerical estimates.
Section~8 is gives a brief list of open problems

\section{Positive probability polynomials}

\begin{definition}[Probability polynomial]
A polynomial
\begin{equation}
    P(z)=\sum_{n=0}^{d}q_nz^n
\end{equation}
is a \emph{probability polynomial} if $q_n\ge0$ and $P(1)=1$.  It is \emph{strictly positive} if all coefficients are strictly positive.
\end{definition}

Briggs calls a real polynomial positive when its leading and constant coefficients are positive and all intermediate coefficients are nonnegative \cite{Briggs1985}.  Normalization at $z=1$ converts every such polynomial into a probability polynomial without changing its factorization structure.

\begin{definition}[Positive factorization and atom]
A \emph{positive factorization} of a probability polynomial $P$ is an identity
\begin{equation}
    P=P_1\cdots P_m
    \label{eq:positive-factorization}
\end{equation}
with each $P_j$ a nonconstant probability polynomial.  A probability polynomial that admits no such factorization with $m\ge2$ is called a \emph{positive atom}.  Equivalently, it is $p$-irreducible in Briggs's terminology.
\end{definition}

Monomial factors carry no randomness.  Unless explicitly stated otherwise, we regard deterministic shifts as trivial and assume $P(0)>0$.

\begin{definition}[Refinement order]
Let
\[
F=(P_1,\ldots,P_m),\qquad
G=(Q_1,\ldots,Q_r)
\]
be positive factorizations of the same probability polynomial $P$.
We write
\[
F\preceq G
\]
if $G$ is obtained from $F$ by positively factorizing one or more of the
factors of $F$.
The resulting partially ordered set will be denoted
\[
\mathcal F_+(P).
\]
Its minimal element is the trivial one-factor decomposition $(P)$.
Maximal elements are factorizations into positive atoms.
\end{definition}

\begin{example}[Nonunique maximal atomizations]
\label{ex:C5}
The uniform law on $\{0,\ldots,5\}$ satisfies
\[
1+z+z^2+z^3+z^4+z^5
=
(1+z)(1+z^2+z^4)
=
(1+z+z^2)(1+z^3).
\]
After normalization these become two distinct maximal positive
factorizations of the same probability polynomial.
\end{example}

\begin{proposition}[Incomparable maximal atomizations]
\label{prop:nonlattice}
The positive factorization poset need not possess a unique maximal
element.
In general two maximal atomizations need not admit a common positive
refinement.
Consequently $\mathcal F_+(P)$ is not, in general, a lattice.
\end{proposition}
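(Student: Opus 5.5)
We use Example~\ref{ex:C5}. Put $P(z)=\tfrac16(1+z+\cdots+z^5)$ and consider the two factorizations
\[
F_1=\Bigl(\tfrac12(1+z),\ \tfrac13(1+z^2+z^4)\Bigr),\qquad
F_2=\Bigl(\tfrac13(1+z+z^2),\ \tfrac12(1+z^3)\Bigr).
\]
We will show that each factor is a positive atom, that $F_1$ and $F_2$ are different, and that no positive factorization lies above both in the refinement order. These three facts together prove all three claims of the proposition.

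The first step is to check that the factors are atoms. A factor of degree at most two can only split into linear probability polynomials. The polynomial $1+z$ is linear, so it is already an atom. The roots of $1+z+z^2$ are non-real, so it has no real linear factor at all. The same root argument handles the degree-three factor $1+z^3=(1+z)(1-z+z^2)$. A positive factorization would need a linear factor with a negative root, which must be $1+z$, and the cofactor $1-z+z^2$ has a negative coefficient. So $1+z^3$ is an atom. For $1+z^2+z^4=(1+z+z^2)(1-z+z^2)$, we list all monic real factorizations. The roots are the four primitive sixth and third roots of unity, none of them real. A nontrivial real factorization therefore splits the roots into two conjugate pairs. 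There are two such splittings: the one above, and $(z^2+1)^2-z^2$ written differently. Concretely, the quadratic real factors are the products over conjugate pairs, namely $1+z+z^2$ and $1-z+z^2$. The cofactor $1-z+z^2$ has a negative coefficient, so $1+z^2+z^4$ is an atom. Hence both $F_1$ and $F_2$ are maximal.

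The second step is distinctness. The two multisets of factors differ, so the maximal elements differ and the maximum is not unique.

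The third step shows there is no common refinement. Suppose $G$ refines both $F_1$ and $F_2$. Because every factor of $F_1$ is an atom, any refinement of $F_1$ equals $F_1$ itself, up to ordering. The same holds for $F_2$. Then $G=F_1=F_2$, which contradicts the second step. Therefore these two elements have no upper bound in $\mathcal F_+(P)$, so their join does not exist and $\mathcal F_+(P)$ is not a lattice.

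The only real obstacle is the atomicity of $1+z^2+z^4$. Here we must be sure that every real factorization comes from a conjugate-pair splitting of the roots. We must also settle the one remaining pairing, $(z-\omega)(z-\bar\omega')$ with roots taken from different conjugate pairs. That pairing gives non-real coefficients, so it is excluded. With this checked, the argument is complete.
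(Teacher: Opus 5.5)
Your proof is correct and has the same skeleton as the paper's. Both use the example $1+z+\cdots+z^5$, show the four factors are atoms, conclude there are two distinct maximal elements, and rule out a common upper bound because a factorization into atoms admits no proper refinement.

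The difference is in how atomicity is certified. The paper handles $1+z$ and $1+z^3$ with the support obstruction of Corollary~\ref{cor:a+bz^N}: $\{0,N\}$ is not a nontrivial Minkowski sum. That argument does not look at the coefficients and covers every binomial $a+bz^N$ at once. For $1+z^2+z^4$ the paper posits $(a+bz^2)(c+dz^2)$ and contradicts $ad+bc=1$ using $ad+bc\ge 2\sqrt{abcd}=2$. This ansatz is justified only implicitly by Proposition~\ref{prop:support}: each factor's support must lie in $\{0,2,4\}$ and contain $0$, which excludes odd powers and a $1+3$ degree split.

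You argue instead through unique factorization in $\mathbb R[z]$. The only nontrivial real factorization of $1+z^3$ or of $1+z^2+z^4$ forces the cofactor $1-z+z^2$, which has a negative coefficient. This is self-contained and needs no support lemma. It is the same mechanism the paper uses in Proposition~\ref{prop:different-atomization-entropies}. The support route, for its part, generalizes more readily to sparse polynomials.

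One cosmetic slip: $1+z^2+z^4$ has exactly two conjugate pairs of roots, so there is only one splitting, not two. The expression $(z^2+1)^2-z^2$ is that same factorization written as a difference of squares. Your next sentence states this correctly, so nothing in the argument is affected.
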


\begin{proof}
Consider the polynomial of Example~\ref{ex:C5}.

The factors $1+z$ and $1+z^3$ are positive atoms by
Corollary~\ref{cor:a+bz^N}, since their
supports are additively indecomposable.

The polynomial $1+z+z^2$ has two nonreal roots and therefore cannot
factor into positive linear factors.

It remains to consider
\[
1+z^2+z^4.
\]
Suppose
\[
1+z^2+z^4=(a+bz^2)(c+dz^2),
\qquad
a,b,c,d>0.
\]
Comparing coefficients gives
\[
ac=1,\qquad
bd=1,\qquad
ad+bc=1.
\]
However,
\[
ad+bc
\ge
2\sqrt{abcd}
=
2,
\]
contradicting the last equality.
Hence $1+z^2+z^4$ is also a positive atom.

Thus both displayed factorizations are maximal.

If a common positive refinement existed, it would refine both maximal
factorizations, forcing one of the atoms above to admit a further
positive factorization, which is impossible.
Hence no common refinement exists.
\end{proof}
Unlike unique factorization domains, maximal positive atomizations need not admit a common refinement. Thus the positive factorization poset is fundamentally different from the divisor lattice associated with ordinary polynomial or integer factorization. This nonuniqueness is precisely what makes the entropy optimization problem nontrivial.
\begin{remark}
Every positive factorization is obtained by grouping the irreducible real
factors of the polynomial.
Since there are only finitely many conjugation-invariant root
partitions, and normalization is fixed by $P_j(1)=1$ of every factor, the positive factorization poset is finite.
Consequently the maximum defining $S_{\rm fac}$
always exists.
\end{remark}

\section{Factorization entropy}

For a probability polynomial
\[
Q(z)=\sum_n q_nz^n,
\]
let
\[
H(Q)
=
-\sum_n q_n\log q_n ,
\]
with the convention $0\log0=0$.

\begin{definition}[Factorization entropy]
Let $P$ be a probability polynomial.
Its factorization entropy is
\[
S_{\rm fac}(P)
=
\max_{(P_1,\ldots,P_m)\in\mathcal F_+(P)}
\sum_{j=1}^m H(P_j).
\]
Since $\mathcal F_+(P)$ is finite, the maximum is
always attained.
\end{definition}

The following theorem contains the basic entropy inequality and all of its
main consequences.

\begin{theorem}[Entropy dominance and atomization principle]
\label{thm:entropy}
Let
\[
P=P_1P_2\cdots P_m
\]
be a positive factorization, and let
$X_1,\ldots,X_m$
be independent random variables with probability-generating functions
$P_1,\ldots,P_m$.
If
\[
X=\sum_{j=1}^mX_j,
\]
then
\[
\sum_{j=1}^mH(X_j)-H(X)
=
H(X_1,\ldots,X_m\,|\,X)
\ge0.
\]

Equality holds if and only if the addition map
\[
\operatorname{supp}(X_1)\times\cdots\times
\operatorname{supp}(X_m)
\longrightarrow
\mathbb N,
\qquad
(x_1,\ldots,x_m)
\mapsto
\sum_jx_j
\]
is injective.

Consequently:

\begin{enumerate}
\item
Factorization entropy is monotone under refinement:
if
\[
F\preceq G,
\]
then
\[
S(F)\le S(G).
\]
 \item Consequently,
\[
S_{\rm fac}(P)
=
\max_{\substack{
F\in\mathcal F_+(P)\\
F\text{ maximal}
}}
S(F).
\]
In particular, at least one maximal positive atomization attains
\(S_{\rm fac}(P)\).
\end{enumerate}
\end{theorem}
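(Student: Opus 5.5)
The identity comes from the chain rule for Shannon entropy applied to the joint law of $(X_1,\ldots,X_m,X)$. Since $X$ is a deterministic function of $(X_1,\ldots,X_m)$, we have
\[
H(X_1,\ldots,X_m)=H(X_1,\ldots,X_m,X)=H(X)+H(X_1,\ldots,X_m\,|\,X).
\]
Independence gives $H(X_1,\ldots,X_m)=\sum_j H(X_j)$. Finally, the law of $X_j$ is the coefficient vector of $P_j$, and the law of $X$ is the coefficient vector of $P$ by \eqref{eq:conv-product}. Hence $H(X_j)=H(P_j)$ and $H(X)=H(P)$, and the identity follows. Nonnegativity holds because conditional entropy of discrete variables is nonnegative.

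For the equality case, I would use that $H(X_1,\ldots,X_m\,|\,X)=0$ if and only if, for every $n$ with $\Pr(X=n)>0$, the conditional law of $(X_1,\ldots,X_m)$ given $X=n$ is a point mass. The joint law of $(X_1,\ldots,X_m)$ is a product measure, so it charges every point of $\supp(X_1)\times\cdots\times\supp(X_m)$ with positive probability. Therefore the conditional law given $X=n$ is degenerate exactly when the fiber of the addition map over $n$, restricted to this product of supports, has a single point. Taking all $n$ together, equality holds if and only if the addition map is injective on the product of supports.

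For monotonicity, it suffices by transitivity to treat one elementary refinement step. Suppose $G$ is obtained from $F$ by replacing a factor $P_k$ with a positive factorization $P_k=Q_1\cdots Q_r$. Applying the inequality just proved to this factorization of $P_k$ gives $\sum_i H(Q_i)\ge H(P_k)$. All other summands are unchanged, so $S(G)\ge S(F)$. If $G$ refines several factors at once, I would apply this bound to each refined factor and sum.

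For the last claim, I would use the Remark that $\mathcal F_+(P)$ is finite. Any maximizer $F^*$ of $S$ therefore lies below some maximal element $G$, since one can refine repeatedly and every chain is finite. By monotonicity, $S(G)\ge S(F^*)=S_{\rm fac}(P)$, so the maximum is attained at a maximal atomization. The only point needing care is the equality characterization, specifically the fact that every point of the product of supports has positive mass. Everything else follows directly from standard entropy facts.
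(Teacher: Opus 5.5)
Your proposal is correct and follows the paper's proof essentially step by step. Both use the chain rule together with independence for the identity, the vanishing of conditional entropy exactly when the tuple is determined by its sum for the equality case, refining one factor at a time and summing for monotonicity, and finiteness of $\mathcal F_+(P)$ to pass to a maximal refinement. Your equality argument, which notes that the product measure charges every point of $\supp(X_1)\times\cdots\times\supp(X_m)$, spells out a step the paper only states.
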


\begin{proof}
Independence gives
\[
H(X_1,\ldots,X_m)
=
\sum_{j=1}^mH(X_j).
\]
Since $X$ is a deterministic function of the tuple,
\[
H(X_1,\ldots,X_m)
=
H(X)
+
H(X_1,\ldots,X_m\,|\,X),
\]
which proves the identity and the inequality.

Equality holds precisely when the tuple
$(X_1,\ldots,X_m)$
is determined by its sum on the support of the joint law, i.e. when the
addition map is injective.

If $G$ is obtained from $F$ by refining a single factor
$Q=RS$,
the inequality just proved gives
\[
H(Q)\le H(R)+H(S),
\]
while all other factor entropies remain unchanged.
Iterating proves monotonicity under arbitrary refinements.

Every factorization admits a maximal refinement, since
$\mathcal F_+(P)$ is finite.  Refinement monotonicity shows that the
entropy of such a maximal refinement is at least that of the original
factorization.  Hence the maximum over all factorizations equals the
maximum over maximal atomizations.
\end{proof}

\begin{example}[Strict inequality]
For two fair Bernoulli variables,
\[
\left(\frac12+\frac12z\right)^2
=
\frac14+\frac12z+\frac14z^2.
\]
The factor entropy equals $2\log2$, whereas the entropy of the observed
sum is
$\frac32\log2$.
The missing
$\frac12\log2$
is precisely
\[
H(X_1,X_2\,|\,X).
\]
\end{example}

\begin{example}[Equality]
Let
\[
P(z)=((1-p)+pz^2)((1-q)+qz).
\]
The supports
$\{0,2\}$ and $\{0,1\}$
sum injectively onto
$\{0,1,2,3\}$.
Hence
\[
H(X+Y)=H(X)+H(Y),
\]
hence this factorization is an equality case of Theorem~\ref{thm:entropy}.
\end{example}

A natural question arises: will two nonequivalent maximal atomizations of a probability polynomial carry the same entropy? The next proposition answers this in the negative.

\begin{proposition}[Maximal atomizations can have different entropies]
\label{prop:different-atomization-entropies}
For \(0<\epsilon<1\), define
\[
R_\epsilon(z)=1-(1-\epsilon)z+z^2
\]
and
\[
P_\epsilon(z)
=(1+z)(1+z+z^2)R_\epsilon(z).
\]
Then \(P_\epsilon\) admits the two maximal positive atomizations
\begin{align}
\mathcal F_\epsilon:\qquad
P_\epsilon(z)
&=(1+z)
 \bigl[1+\epsilon z+(1+\epsilon)z^2
       +\epsilon z^3+z^4\bigr],
\\
\mathcal G_\epsilon:\qquad
P_\epsilon(z)
&=(1+z+z^2)
 \bigl[1+\epsilon z+\epsilon z^2+z^3\bigr].
\end{align}
After normalizing every factor at \(z=1\), their factor entropies satisfy
\[
S(\mathcal G_\epsilon)-S(\mathcal F_\epsilon)
=
\frac13\log(1+\epsilon)
-\frac{\epsilon}{3(1+\epsilon)}\log\epsilon
>0.
\]
Hence maximal positive atomizations need not have equal entropy, and the
maximization in the definition of \(S_{\rm fac}\) is genuinely nontrivial.
\end{proposition}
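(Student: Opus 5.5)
The plan is a direct verification in three steps: (i) check that both displayed identities are correct factorizations of $P_\epsilon$; (ii) show that every displayed factor is a positive atom, so that both factorizations are maximal in $\mathcal F_+(P_\epsilon)$; (iii) compute the normalized factor entropies and subtract.

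For (i), I would just expand the products. Multiplying $1+z+z^2$ by $R_\epsilon$ gives coefficients $1,\ \epsilon,\ 1+\epsilon,\ \epsilon,\ 1$, which is the quartic in $\mathcal F_\epsilon$. Multiplying $1+z$ by $R_\epsilon$ gives $1,\ \epsilon,\ \epsilon,\ 1$, which is the cubic in $\mathcal G_\epsilon$. Both identities then follow from $P_\epsilon=(1+z)(1+z+z^2)R_\epsilon$.

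For (ii), I would use the remark that every positive factorization groups the irreducible real factors of the polynomial.
\begin{itemize}
\item $1+z$ is an atom by Corollary~\ref{cor:a+bz^N}.
\item $1+z+z^2$ has nonreal roots, so it admits no positive factorization, as in Proposition~\ref{prop:nonlattice}.
\end{itemize}
The key observation concerns $R_\epsilon$. Its discriminant is $(1-\epsilon)^2-4<0$, so it is irreducible over $\R$, and its middle coefficient $-(1-\epsilon)$ is negative, so it is not a positive polynomial.
\begin{itemize}
\item The cubic $(1+z)R_\epsilon$ has real irreducible factors $1+z$ and $R_\epsilon$. Its only candidate nontrivial factorization is therefore $(1+z)\cdot R_\epsilon$, which is not positive, so the cubic is an atom.
\item The quartic $(1+z+z^2)R_\epsilon$ has real irreducible factors $1+z+z^2$ and $R_\epsilon$; these are distinct since $\epsilon\neq 0$. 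Its only candidate splitting is $(1+z+z^2)\cdot R_\epsilon$, which again fails because $R_\epsilon$ is not positive, so the quartic is an atom.
\end{itemize}
Both $\mathcal F_\epsilon$ and $\mathcal G_\epsilon$ are therefore factorizations into atoms, hence maximal elements of $\mathcal F_+(P_\epsilon)$.

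For (iii), after normalization $H(1+z)=\log 2$ and $H(1+z+z^2)=\log 3$.
\begin{itemize}
\item The cubic has total mass $2(1+\epsilon)$ and weights $\tfrac{1}{2(1+\epsilon)}$ (twice) and $\tfrac{\epsilon}{2(1+\epsilon)}$ (twice). Its entropy is
\[
\log\bigl(2(1+\epsilon)\bigr)-\frac{\epsilon}{1+\epsilon}\log\epsilon .
\]
\item The quartic has total mass $3(1+\epsilon)$. Its entropy is
\[
\log\bigl(3(1+\epsilon)\bigr)-\frac{2\epsilon\log\epsilon+(1+\epsilon)\log(1+\epsilon)}{3(1+\epsilon)} .
\]
\end{itemize}
Forming $S(\mathcal G_\epsilon)-S(\mathcal F_\epsilon)$, the terms $\log 2+\log 3+\log(1+\epsilon)$ cancel. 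The $\log\epsilon$ terms combine as $-\tfrac{\epsilon}{1+\epsilon}+\tfrac{2\epsilon}{3(1+\epsilon)}=-\tfrac{\epsilon}{3(1+\epsilon)}$. What remains is the stated expression
\[
\frac13\log(1+\epsilon)-\frac{\epsilon}{3(1+\epsilon)}\log\epsilon .
\]
Positivity is immediate for $0<\epsilon<1$: $\log(1+\epsilon)>0$ and $\log\epsilon<0$, so both terms are positive.

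The main point needing care is the atomicity argument in (ii). One must rule out every positive factorization, not only the obvious one. I would handle this by appealing to the real-factor grouping remark, noting that in each case the only available nontrivial grouping contains $R_\epsilon$ as a factor, which has a negative coefficient.
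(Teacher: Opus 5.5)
Your proposal is correct and follows the paper's proof essentially step for step. You expand $(1+z)R_\epsilon$ and $(1+z+z^2)R_\epsilon$, deduce atomicity of the cubic and quartic from the real irreducible factorization together with the negative middle coefficient of $R_\epsilon$, and compute the normalized entropies; your closed forms and the cancellation of the $\log\epsilon$ terms check out. The only blemish is cosmetic: $1+z+z^2$ and $R_\epsilon$ are distinct because $-(1-\epsilon)\neq 1$ (they would coincide only at $\epsilon=2$), not because $\epsilon\neq 0$, and distinctness is not needed for the grouping argument anyway.
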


\begin{proof}
The two identities follow from
\[
(1+z)R_\epsilon(z)
=1+\epsilon z+\epsilon z^2+z^3
\]
and
\[
(1+z+z^2)R_\epsilon(z)
=1+\epsilon z+(1+\epsilon)z^2+\epsilon z^3+z^4.
\]
All displayed coefficients are positive.

The real polynomials \(1+z\), \(1+z+z^2\), and
\(R_\epsilon\) are pairwise coprime, while \(R_\epsilon\) is not a
positive polynomial.  It follows from unique factorization over
\(\mathbb R[z]\) that neither
\((1+z)R_\epsilon\) nor
\((1+z+z^2)R_\epsilon\) admits a nontrivial positive factorization.
Thus both displayed factorizations are maximal.

After normalization, the four factor coefficient vectors are
\[
\left(\frac12,\frac12\right),
\qquad
\frac{(1,\epsilon,1+\epsilon,\epsilon,1)}
     {3(1+\epsilon)},
\]
and
\[
\left(\frac13,\frac13,\frac13\right),
\qquad
\frac{(1,\epsilon,\epsilon,1)}
     {2(1+\epsilon)}.
\]
Substitution into the Shannon entropy gives
\[
S(\mathcal G_\epsilon)-S(\mathcal F_\epsilon)
=
\frac13\log(1+\epsilon)
-\frac{\epsilon}{3(1+\epsilon)}\log\epsilon.
\]
This quantity is strictly positive for \(0<\epsilon<1\).
\end{proof}

The difference
\[
\Delta_{\rm fac}(P)
=
S_{\rm fac}(P)-H(P)
\]
will be called the
\emph{latent factorization entropy}.
It measures the maximum hidden uncertainty about independent latent
outcomes that remains after only their sum has been observed.

\section{Structural results}

\subsection{Support obstructions}

\begin{proposition}[Minkowski support identity]
\label{prop:support}
If $P=QR$ is a positive factorization, then
\begin{equation}
    \supp P=\supp Q+\supp R
    :=\{i+j:i\in\supp Q,\ j\in\supp R\}.
    \label{eq:Minkowski}
\end{equation}
Therefore, if $\supp P$ is not a nontrivial Minkowski sum, then $P$ is a positive atom.
\end{proposition}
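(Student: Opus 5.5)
We prove the identity \eqref{eq:Minkowski} by comparing coefficients, using that no cancellation is possible among nonnegative terms. Write $Q(z)=\sum_i q_iz^i$ and $R(z)=\sum_j r_jz^j$ with all $q_i,r_j\ge0$. The coefficient of $z^n$ in $P=QR$ is the convolution
\[
p_n=\sum_{i+j=n}q_ir_j .
\]

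First we show $\supp P\subseteq\supp Q+\supp R$. If $p_n>0$, then some summand $q_ir_j$ with $i+j=n$ is nonzero. Hence $q_i>0$ and $r_j>0$, so $n\in\supp Q+\supp R$.

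Next we show $\supp Q+\supp R\subseteq\supp P$. Suppose $i_0\in\supp Q$ and $j_0\in\supp R$, and put $n=i_0+j_0$. The sum defining $p_n$ consists of nonnegative terms and contains the strictly positive term $q_{i_0}r_{j_0}$. Therefore $p_n\ge q_{i_0}r_{j_0}>0$.

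This second inclusion is the only step where positivity matters. For general real coefficients, cancellation could remove points of the Minkowski sum, as in $(1+z)(1-z)=1-z^2$. With nonnegative coefficients cancellation cannot occur, so there is no real obstacle here; the step simply needs to be stated explicitly.

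For the consequence, we argue by contrapositive. Suppose $P$ is not a positive atom. Then it has a positive factorization with at least two nonconstant probability-polynomial factors. Grouping the factors gives $P=QR$ with $Q$ and $R$ both nonconstant probability polynomials.

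By the standing convention $P(0)>0$, and $P(0)=Q(0)R(0)$, so $Q(0)>0$ and $R(0)>0$. Thus $0$ lies in both supports. Because $Q$ and $R$ are nonconstant, each support also contains a positive integer, so each support has at least two elements.

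By the identity above, $\supp P=\supp Q+\supp R$ is then a Minkowski sum of two sets, each of size at least two. This is a nontrivial Minkowski decomposition of $\supp P$. Hence, if $\supp P$ admits no nontrivial Minkowski decomposition, $P$ must be a positive atom.

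Where the convention $P(0)>0$ is dropped, one can strip monomial factors first. These only translate supports, so the same conclusion holds up to shifts.
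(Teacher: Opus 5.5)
Your proof is correct and follows the paper's argument: the coefficient of $z^n$ in $QR$ is a sum of nonnegative terms $q_ir_j$, so it is positive exactly when some pair $i\in\supp Q$, $j\in\supp R$ satisfies $i+j=n$. Your explicit derivation of the consequence fills in a step the paper leaves implicit: using $P(0)>0$, both supports contain $0$ and a positive integer, so the decomposition is nontrivial.
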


\begin{proof}
The coefficient of $z^n$ in $QR$ is $\sum_{i+j=n}q_ir_j$.  Since all summands are nonnegative, this coefficient is positive exactly when at least one pair $(i,j)$ in the two supports satisfies $i+j=n$.
\end{proof}

\begin{corollary}
For every $N\ge1$, the binomial $a+bz^N$ with $a,b>0$ is a positive atom.\label{cor:a+bz^N}
\end{corollary}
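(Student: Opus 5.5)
We apply Proposition~\ref{prop:support} directly. Suppose $a+bz^N=QR$ is a positive factorization with $Q,R$ nonconstant probability polynomials, normalized after dividing by $a+b$. The support identity \eqref{eq:Minkowski} gives
\[
\{0,N\}=\supp Q+\supp R .
\]
The plan is to show that $\{0,N\}$ admits no nontrivial Minkowski decomposition, where nontrivial means both summands have at least two elements. This is forced because $Q$ and $R$ are nonconstant and satisfy $Q(0)R(0)=a>0$.

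First, since $Q(0)R(0)=a>0$, both $Q(0)$ and $R(0)$ are positive. Hence $0\in\supp Q$ and $0\in\supp R$, so $\supp Q\subseteq\supp Q+\supp R$ and likewise for $R$. Second, each factor is nonconstant, so $\supp Q$ contains some $i>0$ and $\supp R$ contains some $j>0$. Because both supports lie inside $\{0,N\}$, this forces $i=j=N$. Then $N+N=2N$ lies in $\supp Q+\supp R=\{0,N\}$, which is impossible because $N\ge1$.

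Hence no nontrivial positive factorization exists, and $a+bz^N$ is a positive atom. The argument is elementary; the only point needing care is the bookkeeping that $0$ lies in both supports and that nonconstancy forces a positive exponent in each support. Both facts follow from nonnegativity and from the constant term $a$ being positive. A degree count would also work, since $\deg Q+\deg R=N$ with both degrees equal to $N$ is a contradiction, but the support argument is the one the paper's framework suggests.
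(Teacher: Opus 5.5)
Your proof is correct and follows the paper's route. The paper also applies the Minkowski support identity (Proposition~\ref{prop:support}) and simply asserts that $\{0,N\}$ is not a nontrivial Minkowski sum of two sets each containing $0$ and a positive integer; your argument supplies the bookkeeping behind that assertion ($0$ lies in both supports, so each support sits inside $\{0,N\}$, which forces $N$ into both supports and $2N$ into their sum).
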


\begin{proof}
The set $\{0,N\}$ cannot be represented as a nontrivial Minkowski sum of two subsets each containing zero and a positive integer.
\end{proof}

This recovers, in a purely additive form, the maximal-linkage example $1+z^N$ emphasized in binding-polynomial theory \cite{Briggs1985}.

\subsection{Real-rooted and Hurwitz-stable sectors}

\begin{theorem}[Real-rooted probability polynomials]
\label{thm:real-rooted}
Let $P$ be a degree-$N$ probability polynomial with positive endpoints.  The following are equivalent:
\begin{enumerate}[label=(\roman*)]
\item every zero of $P$ is real;
\item every zero of $P$ is negative;
\item $P$ is the probability-generating polynomial of a Poisson-binomial law,
\begin{equation}
    P(z)=\prod_{j=1}^N(1-\nu_j+\nu_jz),
    \qquad 0<\nu_j<1.
    \label{eq:PB}
\end{equation}
\end{enumerate}
In this case
\begin{equation}
    \Sfac(P)=\sum_{j=1}^Nh(\nu_j),
    \qquad h(x)=-x\log x-(1-x)\log(1-x).
    \label{eq:SfacPB}
\end{equation}
\end{theorem}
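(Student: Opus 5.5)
The equivalences form a short cycle, so I will prove (i)$\Rightarrow$(ii)$\Rightarrow$(iii)$\Rightarrow$(i). For (i)$\Rightarrow$(ii), note that a polynomial with nonnegative coefficients and positive constant term satisfies $P(x)\ge p_0>0$ for every $x\ge0$. So no real zero is nonnegative, and all real zeros are negative. For (ii)$\Rightarrow$(iii), write the zeros as $-r_1,\dots,-r_N$ with $r_j>0$. Then $P(z)=p_N\prod_j(z+r_j)$, and I regroup this as $\prod_j c_j(r_j+z)$ with $c_j=1/(1+r_j)$. Each factor then becomes $1-\nu_j+\nu_j z$ with $\nu_j=1/(1+r_j)\in(0,1)$. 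The constants multiply correctly because $P(1)=1$ forces $p_N\prod_j(1+r_j)=1$. For (iii)$\Rightarrow$(i), each factor $1-\nu_j+\nu_jz$ has the single real zero $-(1-\nu_j)/\nu_j$.

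For the entropy formula I will use the description of $\mathcal F_+(P)$ in the Remark: every positive factor is a product of a subcollection of the real irreducible factors of $P$, renormalized at $z=1$. Here all irreducible factors are linear with negative roots. Any grouping of them yields a product of Bernoulli generating functions, which has positive coefficients and is normalized. In particular the complete splitting $F_{\max}=(1-\nu_1+\nu_1z,\dots,1-\nu_N+\nu_Nz)$ lies in $\mathcal F_+(P)$. This splitting is a maximal element, and in fact the unique one, since its factors have degree one and cannot be refined further. Every other factorization is obtained by grouping these linear factors, so it satisfies $F\preceq F_{\max}$.

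The monotonicity in Theorem~\ref{thm:entropy} then gives $S(F)\le S(F_{\max})$ for all $F$. Hence $\Sfac(P)=S(F_{\max})=\sum_j h(\nu_j)$, because the Shannon entropy of the normalized factor $(1-\nu_j,\nu_j)$ is exactly $h(\nu_j)$.

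The main point needing care is the claim that every element of $\mathcal F_+(P)$ is a coarsening of $F_{\max}$. A positive factor $Q$ of $P$ divides $P$ in $\R[z]$, so by unique factorization it is a constant times a product of some of the $(z+r_j)$ with multiplicity. The normalization $Q(1)=1$ fixes the constant, so $Q$ equals the product of the corresponding normalized Bernoulli factors. Repeated roots cause no difficulty, since only multiplicities are being distributed among the factors.
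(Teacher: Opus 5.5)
Your proof is correct and follows essentially the same route as the paper. You prove the equivalences by positivity of the coefficients and by normalizing each linear factor $(z+r_j)/(1+r_j)$. You then observe that every positive factorization is a coarsening of the complete Bernoulli splitting, so refinement monotonicity from Theorem~\ref{thm:entropy} yields the entropy formula. Your unique-factorization argument in $\R[z]$ simply spells out the paper's one-line claim that any other factorization reduces all the way to the Poisson-binomial factors.
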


\begin{proof}
A polynomial with nonnegative coefficients has no positive real zero.  Thus real-rootedness forces every zero to be negative.  Writing a zero as $r_j<0$ and normalizing the corresponding linear factor at $z=1$ gives
\begin{equation}
    \frac{z-r_j}{1-r_j}=1-\nu_j+\nu_jz,
    \qquad \nu_j=(1-r_j)^{-1}\in(0,1).
\end{equation}
Conversely, \eqref{eq:PB} is real-rooted.  The linear factors are positive atoms, and any other factorization will be further reducible all the way to the Poisson-binomials.  Theorem~\ref{thm:entropy} shows that the fully split factorization maximizes the entropy.
\end{proof}

This characterization is standard in the theory of Poisson-binomial laws and total positivity \cite{AissenEtAl1952,TangTang2023}; it is also naturally situated within the broader theory of stable polynomials \cite{BorceaBranden2009}.

\begin{corollary}[Hurwitz-stable sector]
\label{cor:Hurwitz}
If $P$ is a positive Hurwitz-stable polynomial, then every positive atom has degree one or two and
\begin{equation}
    \Sfac(P)=S_{\mathrm{fac}}^{(\le2)}(P),
\end{equation}
where the right-hand side restricts the factor degrees to at most two.
\end{corollary}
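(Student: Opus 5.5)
The plan is to reduce to the real factorization of $P$ and then apply Theorem~\ref{thm:entropy}. A Hurwitz-stable polynomial has all zeros in the open left half-plane $\Re z<0$. Over $\mathbb R[z]$, $P$ factors into linear factors $z-r$ with $r<0$ and quadratic factors
\[
(z-w)(z-\bar w)=z^2-2\Re(w)\,z+|w|^2
\]
with $\Re w<0$. Every such irreducible real factor has strictly positive coefficients: $-r>0$, and $-2\Re w>0$, $|w|^2>0$. So after normalization at $z=1$, each irreducible real factor of $P$ is itself a probability polynomial. (Here ``positive Hurwitz-stable'' is read as a probability polynomial all of whose zeros lie in the open left half-plane; since $P(0)>0$, no zero sits at the origin.)

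Next I show that the positive atoms occurring in factorizations of $P$ have degree at most two. Take any positive factorization $P=P_1\cdots P_m$. By unique factorization in $\mathbb R[z]$, each $P_j$ is, up to a positive constant, a product of some of the irreducible real factors of $P$. Each $P_j$ is therefore again Hurwitz-stable. If $P_j$ has degree at least three, it contains at least two irreducible real factors, and each has positive coefficients by the first paragraph. Splitting off one of them, and normalizing both pieces at $z=1$ (their values at $1$ are positive), gives a nontrivial positive factorization of $P_j$. Hence every positive atom dividing $P$ is a single irreducible real factor, of degree one or two. Note that an atom of degree two may also be the product of two real linear factors only if it is not further positively splittable; but linear factors $z-r$ with $r<0$ are positive, so a degree-two atom must be an irreducible quadratic. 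In either case the degree is at most two.

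Finally, Theorem~\ref{thm:entropy}(2) states that $\Sfac(P)$ is attained on a maximal atomization. By the previous step, every maximal atomization consists of factors of degree at most two; indeed it is exactly the complete real factorization. Hence restricting the maximization to factorizations with all factor degrees $\le2$ loses nothing, and the restricted maximum is trivially bounded by $\Sfac(P)$. This gives $\Sfac(P)=S_{\mathrm{fac}}^{(\le2)}(P)$. As a byproduct, the maximal atomization is unique up to ordering.

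The only step needing care, rather than real difficulty, is the positivity of the quadratic factors, which is exactly where the Hurwitz condition enters. The rest is bookkeeping with real unique factorization and the monotonicity in Theorem~\ref{thm:entropy}.
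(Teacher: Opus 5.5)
Your proof is correct and follows essentially the same route as the paper: you factor $P$ over $\R$ into linear factors with negative roots and left-half-plane quadratics, all of which have strictly positive coefficients, so the positive atoms are exactly these irreducible real factors. The only difference is that you derive uniqueness of the maximal atomization directly from unique factorization in $\R[z]$ and apply the atomization principle of Theorem~\ref{thm:entropy} explicitly, whereas the paper cites Briggs for the uniqueness and leaves the entropy step implicit.
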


\begin{proof}
A real Hurwitz-stable polynomial factors over $\R$ into linear factors with negative roots and quadratic factors associated with nonreal conjugate pairs in the left half-plane.  Every such linear factor is positive.  A quadratic with positive coefficients is either a product of two positive linear factors or a positive atom.  Briggs showed that this factorization into positive linear and $p$-irreducible quadratic factors is unique \cite{Briggs1985}.
\end{proof}

\subsection{Local stability}

For positive integers $d_1,\ldots,d_m$ with $\sum_i d_i=N$, consider the multiplication map
\begin{equation}
  \Phi:\Delta_{d_1}^{\circ}\times\cdots\times\Delta_{d_m}^{\circ}
  \longrightarrow\Delta_N^{\circ},
  \qquad (P_1,\ldots,P_m)\mapsto\prod_iP_i,
  \label{eq:multiplication-map}
\end{equation}
where $\Delta_d^{\circ}$ denotes probability polynomials with strictly positive coefficients and degree $d$. Recall that two real polynomials are called \emph{coprime} if they have
no common nonconstant factor. Equivalently, they share no real root and
no common complex conjugate pair of roots.

The multiplication map sends a tuple of factors to their product. We next ask whether small perturbations of the product determine unique small perturbations of the factors.
\begin{theorem}[Local stability of transverse factorizations]
\label{thm:local-stability}
Suppose $P=P_1\cdots P_m$ with $P_i\in\Delta_{d_i}^{\circ}$ and the factors are pairwise coprime.  Then the differential of $\Phi$ at $(P_1,\ldots,P_m)$ is invertible.  Consequently, $P$ has a neighborhood in $\Delta_N^{\circ}$ in which every polynomial admits a unique nearby positive factorization of the same ordered degree profile.
\end{theorem}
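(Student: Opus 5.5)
The plan is to linearize $\Phi$, identify its differential with a polynomial division problem, and use the Chinese remainder theorem to show that the differential is invertible; the inverse function theorem then gives the local statement. First fix the tangent spaces. The interior $\Delta_d^{\circ}$ is an open subset of the affine hyperplane $\{Q:\deg Q\le d,\ Q(1)=1\}$. Its tangent space is $T_d=\{\dot Q\in\R[z]_{\le d}:\dot Q(1)=0\}$, which has dimension $d$. The domain of $\Phi$ therefore has dimension $\sum_i d_i=N$, matching $\dim T_N=N$, so invertibility of the differential is equivalent to injectivity. Differentiating the product gives
\[
d\Phi(\dot P_1,\ldots,\dot P_m)=\sum_{i=1}^m \dot P_i\prod_{j\ne i}P_j .
\]
This lands in $T_N$, since each summand vanishes at $z=1$ and has degree at most $N$.

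For injectivity, suppose $\sum_i \dot P_i\prod_{j\ne i}P_j=0$. Reduce this identity modulo $P_k$. Every term with $i\ne k$ contains the factor $P_k$, so
\[
P_k \mid \dot P_k\prod_{j\ne k}P_j .
\]
Pairwise coprimality in $\R[z]$ makes $P_k$ coprime to $\prod_{j\ne k}P_j$, hence $P_k\mid \dot P_k$. The degrees are exact: $\deg P_k=d_k$ because the leading coefficient is strictly positive, while $\deg\dot P_k\le d_k$. So $\dot P_k=c_kP_k$ for some constant $c_k$. Evaluating at $z=1$ gives $0=\dot P_k(1)=c_kP_k(1)=c_k$, so $\dot P_k=0$.

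This step, where the normalization kills the scaling direction, is the only real subtlety, and I expect it to be the main obstacle. Without the constraint $\dot P_k(1)=0$ the kernel would contain the rescalings $(c_iP_i)$ with $\sum c_i=0$. One should also check that coprimality over $\R$ is exactly the hypothesis needed. It is, because the argument only uses Bézout in $\R[z]$.

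Finally, apply the inverse function theorem. The map $\Phi$ is polynomial, hence smooth, between open subsets of $N$-dimensional affine spaces, and $d\Phi$ is invertible. So there are neighborhoods $U$ of $(P_1,\ldots,P_m)$ in the product of the $\Delta_{d_i}^{\circ}$ and $V$ of $P$ such that $\Phi:U\to V$ is a diffeomorphism. The neighborhood $U$ can be taken inside the open product of interiors, so the perturbed factors still have strictly positive coefficients and keep the degrees $d_i$. Shrinking $V$ to lie in $\Delta_N^{\circ}$, every $\tilde P\in V$ has exactly one positive factorization $(\tilde P_1,\ldots,\tilde P_m)\in U$ with the same ordered degree profile. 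This is the "unique nearby factorization" asserted in Theorem~\ref{thm:local-stability}.
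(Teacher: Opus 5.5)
Your proposal is correct and follows the paper's proof essentially step for step. You reduce the kernel equation modulo $P_k$, get $P_k\mid\dot P_k$ from coprimality, force $\dot P_k=c_kP_k$ from $\deg\dot P_k\le d_k$, kill $c_k$ by evaluating at $z=1$, and conclude via the dimension count $\sum_i d_i=N$ and the inverse function theorem. Despite the mention of the Chinese remainder theorem, the step you actually carry out is the same coprime-divisibility argument the paper uses.
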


\begin{proof} 
A tangent vector $\dot P_i$ has degree at most $d_i$ and satisfies $\dot P_i(1)=0$.  If it lies in the kernel of the differential, then
\begin{equation}
  \sum_{i=1}^m \dot P_i\prod_{j\ne i}P_j=0.
  \label{eq:kernel}
\end{equation}
Pick any of the factors $P_i$, we can write: 
$P_i M=\dot P_i\prod_{j\ne i}P_j$, where $M$ is a polynomial with degree $N-d_i$. Since $P_i$ is coprime with all the $P_j$, we conclude that
$P_i$ divides $\dot P_i$.  Since $\deg\dot P_i\le\deg P_i$, we have $\dot P_i=c_iP_i$.  Evaluating at $z=1$ gives $c_i=0$. Therefore, there is no non-trivial solution to  \ref{eq:kernel}.
Hence the differential is injective.  The domain and target have the same dimension $\sum_i d_i=N$, so it is invertible.  The inverse function theorem gives a local diffeomorphism, and strict coefficient positivity persists in a sufficiently small neighborhood.
\end{proof}

For two factors, the determinant of the Sylvester-type differential matrix is nonzero precisely when $\Res(P_1,P_2)\ne0$.  Thus root collisions and coefficient-boundary points are the natural singular loci of the factorization geometry.

\section{Exact low-degree geometry}

Throughout this section let
\[
\Delta_N
=
\left\{
p\in\mathbb R_{\ge0}^{N+1}:
\sum_{n=0}^N p_n=1
\right\}
\]
denote the probability simplex, equipped with the normalized Lebesgue
measure \(\lambda_N\) (equivalently, the Dirichlet\((1,\ldots,1)\)
probability measure).

If \(R_N\subset\Delta_N\) denotes the set of positively reducible
probability polynomials, we write
\[
\mu_N=\lambda_N(R_N)
\]
for its volume fraction.

\subsection{Quadratics}
We start with the elementary  case of degree 2 polynomials.
Write
\begin{equation}
    P(z)=c+bz+az^2,
    \qquad a,b,c\ge0,
    \qquad a+b+c=1.
\end{equation}
A nontrivial positive factorization is necessarily a product of two Bernoulli factors.

\begin{proposition}[Quadratic region]
\label{prop:quadratic}
The polynomial $P$ is positively factorable if and only if
\begin{equation}
    b^2\ge4ac.
    \label{eq:quadratic-disc}
\end{equation}
Equivalently, in coordinates $(a,b)$ with $c=1-a-b$,
\begin{equation}
    2(\sqrt a-a)\le b\le1-a.
    \label{eq:quadratic-region}
\end{equation}
Under the uniform measure on $\Delta_2$, the factorable probability is
\begin{equation}
    \mu_2=\frac13.
\end{equation}
\end{proposition}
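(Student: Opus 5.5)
The plan is to reduce factorability to real-rootedness and then compute an elementary area. First I would dispose of degenerate cases. If $a=0$ then $P$ has degree at most one and admits no nontrivial factorization; if $c=0$ the factor $z$ is a deterministic shift. Both cases lie on the boundary of $\Delta_2$ and have $\lambda_2$-measure zero, so they do not affect $\mu_2$. I will therefore assume $a,c>0$, so that $P$ has exact degree two and $P(0)>0$.

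For the equivalence with $b^2\ge4ac$, I would argue as follows. A nontrivial positive factorization of a degree-two polynomial must be a product of two degree-one probability polynomials $(1-\nu_1+\nu_1z)(1-\nu_2+\nu_2z)$, so $P$ is real-rooted. Conversely, by Theorem~\ref{thm:real-rooted}, a real-rooted $P$ with positive endpoints is Poisson-binomial, hence positively factorable. So factorability is equivalent to real-rootedness, i.e.\ to $b^2-4ac\ge0$. In the double-root case the two Bernoulli factors coincide, which is still allowed.

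Next I would convert this to the $(a,b)$ coordinates. Substituting $c=1-a-b$ gives
\[
b^2\ge4a(1-a-b)\iff b^2+4ab+4a^2\ge4a\iff (b+2a)^2\ge4a.
\]
Since $b+2a\ge0$, this is equivalent to $b\ge2\sqrt a-2a=2(\sqrt a-a)$. The upper bound $b\le1-a$ is just $c\ge0$, and the lower bound $2(\sqrt a-a)$ is automatically nonnegative for $a\in[0,1]$. I would also note that the interval is never empty, because
\[
1-a-2\sqrt a+2a=(1-\sqrt a)^2\ge0.
\]

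For the volume, the projection $(a,b,c)\mapsto(a,b)$ maps $\Delta_2$ onto the triangle $\{a,b\ge0,\ a+b\le1\}$, which has area $1/2$, and the uniform measure is proportional to area. The factorable region has area
\[
\int_0^1(1-\sqrt a)^2\,da=1-\tfrac43+\tfrac12=\tfrac16,
\]
so $\mu_2=(1/6)/(1/2)=1/3$.

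No step is a real obstacle. The only point needing care is the converse direction, namely that a real-rooted quadratic yields factors with nonnegative coefficients; Theorem~\ref{thm:real-rooted} supplies this once the boundary cases are excluded.
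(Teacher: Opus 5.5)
Your proof is correct and follows the paper's route. You reduce factorability to real-rootedness via Theorem~\ref{thm:real-rooted}, and you obtain $\mu_2$ from the same integral $\int_0^1[(1-a)-2(\sqrt a-a)]\,da=\int_0^1(1-\sqrt a)^2\,da=1/6$ over a triangle of area $1/2$. The differences are minor: you get the lower boundary by completing the square, $(b+2a)^2\ge4a$, whereas the paper minimizes $b=p_1+a/p_1-2a$ over the Bernoulli parametrization. You also explicitly exclude the boundary $a=0$, where $b^2\ge4ac$ holds but $P$ is an atom, which is a worthwhile precision the paper leaves implicit.
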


\begin{proof}
The first condition is real-rootedness, hence Theorem~\ref{thm:real-rooted}.  Alternatively, write
\begin{equation}
  P(z)=(1-p_1+p_1z)(1-p_2+p_2z).
\end{equation}
Then $a=p_1p_2$ and $b=p_1+p_2-2p_1p_2$.  For fixed $a$, put $p_2=a/p_1$ and minimize over $a\le p_1\le1$.  The minimum occurs at $p_1=\sqrt a$ and equals $2(\sqrt a-a)$; the maximum is $1-a$.  The coordinate area of the factorable region is
\begin{equation}
    \int_0^1\bigl[(1-a)-2(\sqrt a-a)\bigr]da=\frac16,
\end{equation}
while the simplex area is $1/2$.
\end{proof}

\subsection{Cubics}
\begin{proposition}[Cubic criterion and volume]
\label{prop:cubic}
Let
\begin{equation}
  P(z)=p_0+p_1z+p_2z^2+p_3z^3,
  \qquad p_i>0.
\end{equation}
Then $P$ is positively factorable if and only if
\begin{equation}
  p_1p_2\ge p_0p_3.
  \label{eq:cubic-criterion}
\end{equation}
Consequently, under the uniform measure on $\Delta_3$,
\begin{equation}
  \mu_3=\frac12.
\end{equation}
\end{proposition}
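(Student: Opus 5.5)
The plan is to reduce everything to the existence of a single positive linear factor. Because $\deg P=3$, any nontrivial positive factorization consists of a positive linear factor times a positive quadratic; the quadratic may split further, but that does not matter for factorability. A real cubic always has a real zero. By the argument of Theorem~\ref{thm:real-rooted}, all coefficients being positive forces this zero to be negative; call it $-r$ with $r>0$. So the question becomes: for which negative roots $-r$ is the cofactor a positive quadratic?

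\textbf{Step 1 (division).} Write $P(z)=(z+r)(\alpha z^2+\beta z+\gamma)$ and compare coefficients:
\[
p_3=\alpha,\qquad p_2=\beta+r\alpha,\qquad p_1=\gamma+r\beta,\qquad p_0=r\gamma .
\]
Then $\alpha=p_3>0$ and $\gamma=p_0/r>0$ automatically. The only constraint is $\beta\ge0$. There are two expressions for $\beta$: $\beta=p_2-rp_3$ from the top, and $\beta=(p_1r-p_0)/r^2$ from the bottom. These agree precisely because $-r$ is a root. Hence a positive factorization with linear factor $z+r$ exists iff $r\le p_2/p_3$, equivalently iff $r\ge p_0/p_1$. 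Normalization at $z=1$ is harmless here.

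\textbf{Step 2 (necessity).} Suppose $P$ is positively factorable. Then some root $-r$ gives $\beta\ge0$, so both $r\le p_2/p_3$ and $r\ge p_0/p_1$ hold. Therefore $p_0/p_1\le p_2/p_3$, which is \eqref{eq:cubic-criterion}.

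\textbf{Step 3 (sufficiency).} Assume $p_1p_2\ge p_0p_3$, so the interval $[p_0/p_1,\,p_2/p_3]$ is nonempty. Set $f(r)=P(-r)=p_0-p_1r+p_2r^2-p_3r^3$.
\begin{itemize}
\item At $r_-=p_0/p_1$ we get $f(r_-)=r_-^2(p_2-p_3r_-)\ge0$, since $r_-\le p_2/p_3$.
\item At $r_+=p_2/p_3$ we get $f(r_+)=p_0-p_1p_2/p_3\le0$.
\end{itemize}
By the intermediate value theorem, $f$ has a zero $r$ in the interval. This $r$ satisfies $r\le p_2/p_3$, so Step~1 yields a positive factorization. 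I expect this to be the main obstacle: one must produce a root in the right window, not merely some negative root. The sign evaluation at the two endpoints is what makes the argument work.

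\textbf{Step 4 (volume).} The measure of the set where some $p_i=0$ is zero, so we may work with strictly positive coefficients. Consider the coordinate permutation $(p_0,p_1,p_2,p_3)\mapsto(p_1,p_0,p_3,p_2)$. It preserves the Dirichlet$(1,1,1,1)$ measure and exchanges the products $p_1p_2$ and $p_0p_3$. Hence the regions $\{p_1p_2>p_0p_3\}$ and $\{p_1p_2<p_0p_3\}$ have equal measure. The equality locus is a hypersurface of measure zero, so $\mu_3=1/2$.
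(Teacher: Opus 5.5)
Your proof is correct. The sufficiency and volume parts match the paper's. The paper also evaluates $g(s)=P(-s)$ at $s=p_2/p_3$, divides out $z+s$ to get the quotient $p_3z^2+(p_2-sp_3)z+p_0/s$, and uses the same involution $(p_0,p_1,p_2,p_3)\mapsto(p_1,p_0,p_3,p_2)$. Its left endpoint is simply $s=0$, where $g(0)=p_0>0$ holds without invoking the hypothesis. By your own Step~1, any root in $(0,p_2/p_3]$ automatically lies in $[p_0/p_1,p_2/p_3]$, so the obstacle you anticipated is milder than you feared.

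The genuine difference is in necessity. The paper multiplies out $(a+bz)(u+vz+wz^2)$ and exhibits the identity $p_1p_2-p_0p_3=v(a^2w+abv+b^2u)$. This is a root-free, manifestly nonnegative certificate, and it shows directly that the equality case is exactly $v=0$. You instead compare the two expressions $p_2-rp_3$ and $(p_1r-p_0)/r^2$ for the middle quotient coefficient. That gives a sharper statement: $z+r$ is an admissible positive linear factor precisely when $-r$ is a root with $p_0/p_1\le r\le p_2/p_3$. This single window criterion drives both directions and identifies which of the possibly several negative roots produce positive factorizations, at the cost of being slightly less direct than the paper's one-line identity.
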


\begin{proof}
If $P=(a+bz)(u+vz+wz^2)$ with nonnegative coefficients, then
\begin{align}
  p_0&=au,&p_1&=av+bu,&p_2&=aw+bv,&p_3&=bw,
\end{align}
and
\begin{equation}
  p_1p_2-p_0p_3=v(a^2w+abv+b^2u)\ge0.
\end{equation}
Conversely, let $g(s)=P(-s)$.  We have $g(0)>0$ and
\begin{equation}
  g(p_2/p_3)=p_0-\frac{p_1p_2}{p_3}\le0.
\end{equation}
Hence $P$ has a root $-s$ with $0<s\le p_2/p_3$.  Division gives
\begin{equation}
  P(z)=(z+s)\left[p_3z^2+(p_2-sp_3)z+\frac{p_0}{s}\right],
\end{equation}
and the quotient coefficients are nonnegative.  After normalization this is a positive factorization.  Finally, the coordinate permutation
\begin{equation}
  (p_0,p_1,p_2,p_3)\mapsto(p_1,p_0,p_3,p_2)
\end{equation}
interchanges the two strict inequalities in \eqref{eq:cubic-criterion}; the equality surface has measure zero.
\end{proof}

The criterion is the coefficient form of the cubic Routh--Hurwitz boundary and agrees with Briggs's root-plane classification \cite{Briggs1985}.

\subsection{Hurwitz-stable quartics}
\label{subsec:quartic-hurwitz}

The Routh--Hurwitz criterion gives an exact lower bound on the
factorable volume in degree four.  Let
\[
\mathcal H_4
=
\left\{
p\in\Delta_4^\circ:
\chi_p \text{ is Hurwitz stable}
\right\},
\]
where Hurwitz stability means that every zero lies in the open left
half-plane, and define
\[
\nu_4=\lambda_4(\mathcal H_4).
\]

\begin{proposition}[Volume of the Hurwitz-stable quartic region]
\label{prop:quartic-hurwitz-volume}
Under the uniform probability measure on \(\Delta_4\),
\[
\nu_4=\frac16.
\]
Every Hurwitz-stable quartic probability polynomial is positively
reducible.  Consequently,
\[
\mu_4\ge \nu_4=\frac16.
\]
\end{proposition}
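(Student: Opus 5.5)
The plan is to treat the two claims separately: first reducibility, which is purely root-geometric, and then the volume, which I expect to become a short exact integral once the Routh--Hurwitz condition is written in coefficient form.

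\emph{Reducibility.} Factor a Hurwitz-stable quartic $\chi_p$ over $\R$. This gives linear factors $z+s$ with $s>0$ and quadratic factors $z^2+2\alpha z+(\alpha^2+\beta^2)$ with $\alpha>0$, coming from conjugate pairs $-\alpha\pm i\beta$. Each such real factor has strictly positive coefficients. Since $\deg\chi_p=4$ and every real irreducible factor has degree at most two, there are at least two such factors. Grouping them into two nonconstant pieces and normalizing each at $z=1$ gives a nontrivial positive factorization, exactly as in the proof of Corollary~\ref{cor:Hurwitz}. This shows $\mathcal H_4\subset R_4$, and hence $\mu_4\ge\nu_4$.

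\emph{Volume.} For $\chi_p=p_0+p_1z+p_2z^2+p_3z^3+p_4z^4$ with all $p_i>0$, the Liénard--Chipart form of the Routh--Hurwitz criterion reduces stability to the single inequality
\[
p_1p_2p_3>p_1^2p_4+p_3^2p_0 .
\]
This condition is homogeneous of degree three in $p$. It therefore depends only on the ray through $p$, so I will use the standard representation of the uniform (Dirichlet$(1,\dots,1)$) law on $\Delta_4$ as $p=E/\sum_iE_i$, where $E_0,\dots,E_4$ are i.i.d.\ $\mathrm{Exp}(1)$. Then
\[
\nu_4=\Pr\Bigl(E_2>\tfrac{E_1E_4}{E_3}+\tfrac{E_3E_0}{E_1}\Bigr).
\]
The computation proceeds in three steps.
\begin{enumerate}
\item Conditioning on $E_0,E_1,E_3,E_4$, the exponential tail of $E_2$ gives $\exp\bigl(-E_1E_4/E_3-E_3E_0/E_1\bigr)$.
\item Integrating out $E_0$ and $E_4$ with $\mathbb E[e^{-tE}]=1/(1+t)$ gives the factors $E_1/(E_1+E_3)$ and $E_3/(E_1+E_3)$.
\item Since $U=E_1/(E_1+E_3)$ is uniform on $(0,1)$, this yields $\nu_4=\mathbb E[U(1-U)]=1/6$.
\end{enumerate}
The equality surface and the coordinate boundary have measure zero, so strict versus nonstrict inequalities do not matter.

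\emph{Main obstacle.} The step I expect to need the most care is justifying the reduction of Hurwitz stability to that single cubic inequality. Specifically, I have to check that the remaining Hurwitz determinants, such as $p_2p_3-p_1p_4>0$, are implied by the displayed inequality together with positivity. They are: the displayed inequality gives $p_1(p_2p_3-p_1p_4)>p_3^2p_0>0$. For this I would cite the Liénard--Chipart criterion and include this short verification.
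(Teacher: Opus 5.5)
Your proposal is correct and follows essentially the same route as the paper. It uses the same reduction of the quartic Routh--Hurwitz conditions to $p_1p_2p_3>p_1^2p_4+p_3^2p_0$, with the same check that this implies $p_2p_3>p_1p_4$. The computation is also the same: the exponential representation, the exponential tail of $E_2$, integrating out $E_0,E_4$, and the uniformity of $U=E_1/(E_1+E_3)$. Reducibility likewise comes from the same grouping of the positive real linear and quadratic factors.
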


\begin{proof}
Write
\[
P(z)=p_0+p_1z+p_2z^2+p_3z^3+p_4z^4,
\qquad
p_i>0.
\]
The quartic Routh--Hurwitz criterion states that \(P\) is Hurwitz
stable if and only if
\begin{align}
p_3p_2&>p_4p_1,
\label{eq:quartic-rh-first}\\
p_3p_2p_1&>p_4p_1^2+p_3^2p_0.
\label{eq:quartic-rh-second}
\end{align}
For positive coefficients, the second inequality implies the first,
since division by \(p_1>0\) gives
\[
p_3p_2
>
p_4p_1+\frac{p_3^2p_0}{p_1}
>
p_4p_1.
\]
Thus Hurwitz stability is equivalent, up to the measure-zero boundary,
to \eqref{eq:quartic-rh-second}.

By a standard result (see e.g. \cite{KotzDirichlet2000}),  uniformly distributed points of \(\Delta_4\) can be represented as
\[
p_i=\frac{X_i}{X_0+\cdots+X_4},
\]
where \(X_0,\ldots,X_4\) are independent positive random variables with
density
\[
f(x)=e^{-x},
\qquad x\ge0.
\]
Since the Routh--Hurwitz inequalities are homogeneous in
the coefficients, normalization cancels, and therefore
\[
\nu_4
=
\Pr\!\left(
X_3X_2X_1>X_4X_1^2+X_3^2X_0
\right).
\]
Conditioning on \(X_0,X_1,X_3,X_4\), this event is equivalent to
\[
X_2>
\frac{X_4X_1}{X_3}
+
\frac{X_3X_0}{X_1}.
\]
Integrating over \(X_2\) we have,
\[
\Pr\!\left(
P\text{ is Hurwitz stable}
\,\middle|\,
X_0,X_1,X_3,X_4
\right)
=
\exp\!\left(
-\frac{X_4X_1}{X_3}
-\frac{X_3X_0}{X_1}
\right).
\]
Averaging first over \(X_0\) and \(X_4\) yields
\begin{align}
\nu_4
&=
\mathbb E_{X_1,X_3}
\left[
\frac{1}{1+X_3/X_1}
\frac{1}{1+X_1/X_3}
\right]
\nonumber\\
&=
\mathbb E_{X_1,X_3}
\left[
\frac{X_1X_3}{(X_1+X_3)^2}
\right].
\end{align}
For two independent unit-rate exponential variables,
\[
U=\frac{X_1}{X_1+X_3}
\]
is uniformly distributed on \([0,1]\).  Hence
\[
\nu_4
=
\mathbb E[U(1-U)]
=
\int_0^1u(1-u)\,du
=
\frac16.
\]

It remains to relate Hurwitz stability to positive reducibility.
A real Hurwitz-stable polynomial factors over \(\mathbb R\) into
linear factors associated with negative real roots and quadratic
factors associated with nonreal conjugate pairs in the left
half-plane.  Such factors have the forms
\[
z+a,
\qquad a>0,
\]
and
\[
z^2-2\operatorname{Re}(r)z+|r|^2,
\qquad \operatorname{Re}(r)<0,
\]
respectively, and therefore have strictly positive coefficients.
A quartic thus admits a positive factorization into factors of degree
one or two.  After normalization at \(z=1\), this is a positive
factorization into probability polynomials.  Consequently
\(\mathcal H_4\subseteq\mathcal R_4\), and
\[
\mu_4\ge\nu_4=\frac16.
\]
\end{proof}

\begin{remark}
The Monte Carlo estimate
\[
\mu_4\simeq 0.43
\]
is substantially larger than \(1/6\).  Thus the Hurwitz-stable sector
provides a rigorous positive-volume subset of the factorable quartics,
but accounts for only part of the full factorable region.
\end{remark}

\section{Equality cases, direct sum decompositions and sparse examples}

This section illustrates two complementary aspects of the theory.
The first is an arithmetic family of probability polynomials admitting
canonical positive factorizations arising from mixed-radix
representations of integers.
The second is the observation that these factorizations are naturally
distinguished as equality cases of the entropy inequality of
Theorem~3.2, rather than as entropy-maximizing factorizations.
Indeed, Proposition~3.5 shows that maximal positive atomizations need
not have equal entropy, so equality in Theorem~3.2 should not be
confused with optimality in the definition of
\(S_{\rm fac}\).

Let

\[
C_N(z)=1+z+\cdots+z^N.
\]

\begin{proposition}
Suppose
\[
N+1=ab,
\qquad
a,b>1.
\]
Then
\[
C_{ab-1}(z)
=
(1+z+\cdots+z^{a-1})
(1+z^a+z^{2a}+\cdots+z^{(b-1)a}).
\]
After normalization, this factorization is an equality case of
Theorem~3.2.
\end{proposition}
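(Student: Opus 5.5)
The plan has two parts: first verify the polynomial identity, then check the injectivity criterion in Theorem~\ref{thm:entropy}.

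For the identity, I expand the right-hand side as
\[
\sum_{i=0}^{a-1}\sum_{j=0}^{b-1} z^{i+ja}.
\]
By division with remainder, every integer $n$ with $0\le n\le ab-1$ can be written uniquely as $n=i+ja$ with $0\le i\le a-1$ and $0\le j\le b-1$. Every exponent $i+ja$ lies in $\{0,\ldots,ab-1\}$. So each monomial $z^n$ with $0\le n\le N=ab-1$ appears exactly once, and the double sum equals $C_{ab-1}(z)$. Alternatively, the telescoping identities $(1-z)C_{a-1}(z)=1-z^a$ and $(1-z^a)(1+z^a+\cdots+z^{(b-1)a})=1-z^{ab}$ give the same result.

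Both factors have nonnegative coefficients and are nonconstant, since $a,b>1$. Normalizing at $z=1$ turns them into the uniform laws on $\{0,\ldots,a-1\}$ and on $\{0,a,\ldots,(b-1)a\}$, and the left side becomes the uniform law on $\{0,\ldots,ab-1\}$. The product identity is unchanged by normalization because $ab=a\cdot b$. Hence this is a positive factorization in the sense of the paper.

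For the equality case, Theorem~\ref{thm:entropy} requires the addition map
\[
\{0,\ldots,a-1\}\times\{0,a,\ldots,(b-1)a\}\to\mathbb N,\qquad (i,ja)\mapsto i+ja,
\]
to be injective. This is exactly the uniqueness half of division with remainder: if $i+ja=i'+j'a$, then $i\equiv i'\pmod a$ with $|i-i'|<a$, so $i=i'$, and then $j=j'$. Injectivity therefore gives $H(X_1)+H(X_2)=H(X)$. As a check, $\log a+\log b=\log(ab)$.

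None of these steps is a real obstacle. The only point needing care is stating that the uniqueness of the mixed-radix (base-$a$ digit) representation is what delivers both the coefficient identity and the injectivity.
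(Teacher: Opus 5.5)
Your proposal is correct and follows the paper's own proof. Both use the uniqueness of the Euclidean (mixed-radix) decomposition $n=r+as$, $0\le r<a$, $0\le s<b$, to show that every monomial of $C_{ab-1}$ appears exactly once, and then use the same uniqueness as injectivity of the latent addition map, which gives equality in Theorem~\ref{thm:entropy}. The only difference is cosmetic: the paper writes $X=A+aB$ with $B$ uniform on $\{0,\ldots,b-1\}$, whereas you work directly with the support $\{0,a,\ldots,(b-1)a\}$ of the second factor.
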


\begin{proof}
Every integer
\[
0\le n<ab
\]
has a unique Euclidean decomposition
\[
n=r+as,
\qquad
0\le r<a,
\qquad
0\le s<b.
\]
Consequently every monomial appears exactly once in the product.

After normalization, let
\(A\)
be uniformly distributed on
\(\{0,\ldots,a-1\}\),
and
\(B\)
uniformly distributed on
\(\{0,\ldots,b-1\}\).
The factorization corresponds precisely to the decomposition

\[
X=A+aB.
\]

Since the map

\[
(A,B)\mapsto A+aB
\]

is injective,

\[
H(A+a B)=H(A,B)=H(A)+H(B),
\]

so the factorization realizes equality in
Theorem~3.2.
\end{proof}

The preceding proposition should be interpreted as exhibiting a
particularly simple equality case of the entropy inequality.
It does \emph{not} imply that this factorization maximizes
\(S_{\rm fac}\).
Indeed, Proposition~3.5 shows that distinct maximal atomizations of the
same polynomial may carry different entropies.
Thus equality in Theorem~3.2 is a property of a particular
factorization rather than of the probability polynomial itself.


We remark that the polynomials ${C_N(z)}$ are special and  illustrate the distinction between density
in the degree parameter and volume inside the probability simplex.
Whenever
\(N+1\)
is composite, the polynomial 
\[
\frac{C_N(z)}{N+1}
\]
admits the above positive factorization.
Since the integers with
\(N+1\)
composite have density one,
\[
\frac{
\#\{1\le N\le M:\;N+1\text{ composite}\}
}{M}
\longrightarrow
1,
\]
the ${C_N(z)}$ polynomials are factorable for almost every degree.

This arithmetic statement should not be confused with the geometric question of the simplex volume
$\mu_N$, studied in the next section.
The polynomial
\(C_N/(N+1)\)
is merely the barycenter of the simplex, whereas a typical Dirichlet
sample fluctuates on the scale
\(N^{-1}\)
in each coordinate.
Accordingly, density-one factorability of the barycenter does not imply
that a typical probability polynomial is positively reducible.

Finally, the factorization above is closely related to factorizations of
finite cyclic groups and mixed-radix tilings of intervals, connecting
the present theory with the arithmetic literature on cyclotomic
factorizations and tilings
\cite{deBruijn1953,Steinberger2012}.

\section{Random probability polynomials}
We now investigate the volume fractions
\(\mu_N\)
and
\(\nu_N\)
numerically under the measure
\(\lambda_N\).
The exact values have been already given $\mu_2=1/3$ and $\mu_3=1/2$.

\subsection{Endpoint constraints}

For $p\sim\mathrm{Dirichlet}(1,\ldots,1)$, each coordinate has the beta marginal
\begin{equation}
  p_0\sim\mathrm{Beta}(1,N),
  \qquad
  \Pr(p_0>t)=(1-t)^N.
  \label{eq:beta-tail}
\end{equation}
This follows directly by slicing the simplex at fixed $p_0=t$: the remaining simplex has volume proportional to $(1-t)^{N-1}$; equivalently it is the standard one-coordinate marginal of a Dirichlet law \cite{KotzDirichlet2000}.  In particular,
\[
\Pr(Np_0>x)
=
\left(1-\frac{x}{N}\right)^N
\longrightarrow
e^{-x},
\qquad x\ge0.
\]

If $P=\prod_{j=1}^mP_j$ and $q_{j,0}=P_j(0)$, then
\begin{equation}
  p_0=\prod_{j=1}^mq_{j,0},
  \qquad
  -\log p_0=\sum_{j=1}^m-\log q_{j,0}.
  \label{eq:endpoint-product}
\end{equation}
Thus a factorization of a typical simplex point must satisfy
\begin{equation}
  \sum_{j=1}^m-\log q_{j,0}=\log N+O_{\mathbb P}(1),
  \label{eq:endpoint-scale}
\end{equation}
with $O(1)$ fluctuations. This is a necessary scale constraint, not a criterion for factorability.  A bounded number of factors forces at least one $q_{j,0}$ to be polynomially small in $N$; a logarithmic number of order-one factors is compatible with the typical scale; and a linear number of balanced Bernoulli-like factors generally gives an exponentially small endpoint coefficient, which lies in a lower simplex tail.  The identical statement holds for the leading coefficient.

\subsection{Monte Carlo estimates}

Uniform simplex samples were generated by normalizing independent unit-rate exponential random variables.  For each sample, the roots were grouped into real roots and conjugate pairs, all conjugation-closed subsets were enumerated, and both a candidate factor and its complement were tested for coefficient nonnegativity.  Table~\ref{tab:mc} and Fig.~\ref{fig:mc} report the resulting estimates.  The exact values in degrees two and three provide internal checks.

\begin{table}[t]
\centering
\small
\begin{tabular}{r r c c}
\toprule
$N$ & samples & $\widehat\mu_N$ & 95\% interval\\
\midrule
2&30000&0.3301&[0.3248,0.3354]\\
3&30000&0.4974&[0.4917,0.5031]\\
4&15000&0.4302&[0.4223,0.4381]\\
5&15000&0.4226&[0.4147,0.4305]\\
6&12000&0.3759&[0.3673,0.3846]\\
7&12000&0.3288&[0.3203,0.3372]\\
8&10000&0.2855&[0.2766,0.2944]\\
9&10000&0.2528&[0.2443,0.2613]\\
10&8000&0.2241&[0.2150,0.2333]\\
11&6000&0.1993&[0.1892,0.2094]\\
12&5000&0.1820&[0.1713,0.1927]\\
13&4000&0.1600&[0.1486,0.1714]\\
14&3000&0.1530&[0.1401,0.1659]\\
\bottomrule
\end{tabular}
\caption{Estimated factorable volume under the uniform measure on $\Delta_N$.  Intervals are normal-approximation binomial intervals.}
\label{tab:mc}
\end{table}

\begin{figure}[t]
  \centering
  \includegraphics[width=.58\textwidth]{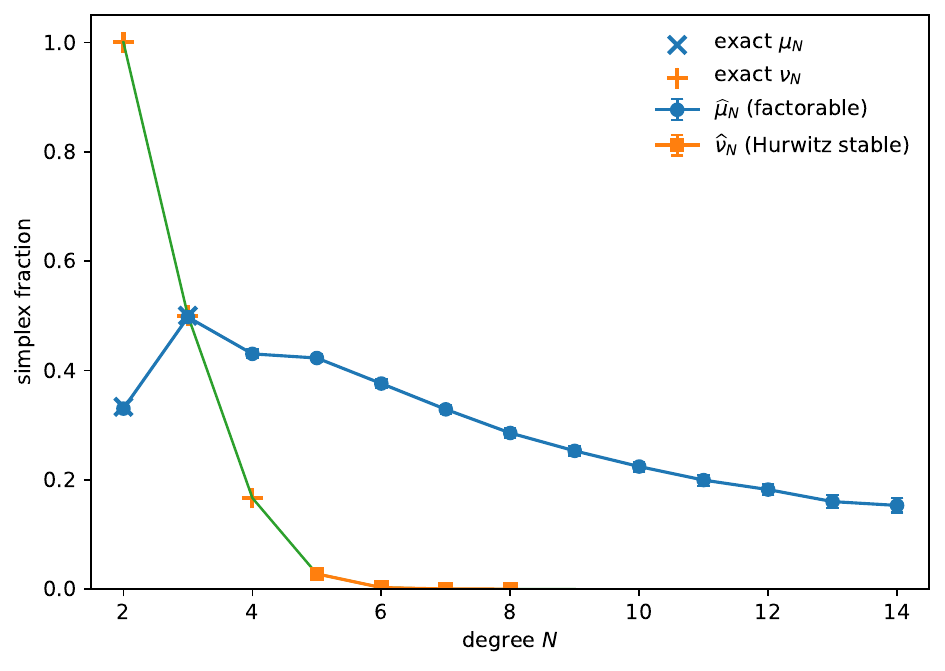}
  \caption{Monte Carlo estimates of $\mu_N$ and $\nu_N$ the fraction of factorizable probability polynomials and the subset of Hurwitz polynomials on the probability simplex.  The cross markers show the exact values. For $N=4$ the exact result $\nu_4=1/6$ is described in proposition~\ref{prop:quartic-hurwitz-volume}}
  \label{fig:mc}
\end{figure}

The data suggest decay after a low-degree maximum, but they do not distinguish a power law from slower alternatives.  The ensemble is a noncentered positive-coefficient random-polynomial ensemble, so work on real zeros of noncentered random polynomials is relevant background, although positive reducibility is a stronger event than the existence of negative real roots \cite{DoNguyenVu2018,Do2019}.

\subsection{Moment-constrained ensembles}
\label{subsec:constrained-ensembles}

The uniform measure on the full simplex treats all probability vectors
equally, independently of the location and width of the corresponding
counting distribution.  Positive factorizations, however, need not be
distributed uniformly with respect to these macroscopic observables.
It is therefore natural to compare the full-simplex ensemble with
ensembles conditioned on the first two moments.

For fixed mean \(m\) and variance \(v\), define the moment-constrained
polytope
\[
\Delta_N(m,v)
=
\left\{
p\in\Delta_N:
\sum_{n=0}^{N} n p_n=m,\qquad
\sum_{n=0}^{N}(n-m)^2p_n=v
\right\}.
\]
Equivalently, once \(m\) is fixed, the constraints are affine:
\[
\sum_{n=0}^{N}p_n=1,\qquad
\sum_{n=0}^{N}np_n=m,\qquad
\sum_{n=0}^{N}n^2p_n=m^2+v.
\]
Whenever this polytope has nonempty relative interior, let
\(\lambda_N^{m,v}\) denote its normalized Hausdorff measure and define
\[
\mu_N(m,v)
=
\lambda_N^{m,v}
\left\{
p:\chi_p\ \text{is positively reducible}
\right\}.
\]
We similarly define
\[
\nu_N(m,v)
=
\lambda_N^{m,v}
\left\{
p:\chi_p\ \text{is Hurwitz stable}
\right\}.
\]
The latter quantity probes the portion of the constrained ensemble
belonging to the classical Hurwitz sector.  For \(N\ge3\),
Hurwitz stability implies positive reducibility, and hence
\[
\nu_N(m,v)\le \mu_N(m,v).
\]

We concentrate on distributions centered in the support,
\[
m=\frac{N}{2},
\]
and compare two width scalings:
\[
v=1
\qquad\text{and}\qquad
v=\frac{N}{2}.
\]
The first ensemble remains narrowly concentrated as the support grows,
whereas the second has a width of order \(\sqrt{N}\), as occurs for
sums of a number of weakly fluctuating contributions proportional to
\(N\).

\subsubsection{Monte Carlo on affine slices}

Uniform samples from
\(\Delta_N(N/2,v)\)
were generated by a hit-and-run Markov chain in the affine subspace
defined by the three moment constraints.  Starting from an interior
point, a random direction was drawn in the null space of the constraint
matrix and the next point was sampled uniformly from the maximal line
segment that remained inside the simplex.  After burn-in, samples were
retained at fixed thinning intervals.

Positive reducibility was tested by computing the root multiset,
grouping nonreal roots into conjugate pairs, and enumerating
conjugation-invariant root subsets as in
remark~\ref{rem:algorithms}.
Hurwitz stability was tested directly from the location of the roots.
The results are shown in Fig.~\ref{fig:constrained-ensembles}.

\begin{figure}[t]
    \centering
    \includegraphics[width=0.78\linewidth]{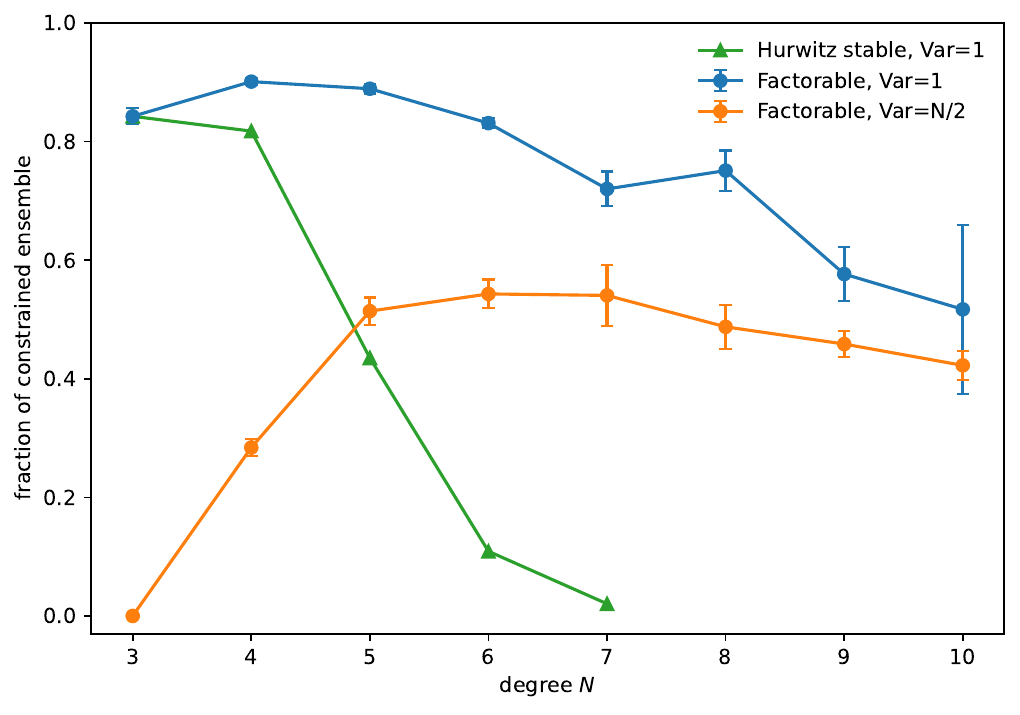}
    \caption{
    Estimated fractions of positively reducible polynomials in the
    centered moment-constrained ensembles
    \(\operatorname{Var}(n)=1\) and
    \(\operatorname{Var}(n)=N/2\).
    The third curve shows the Hurwitz-stable fraction in the
    \(\operatorname{Var}(n)=1\) ensemble. Error bars reflect the variation among independent Markov chains.
    Hurwitz probabilities for
    \(\operatorname{Var}(n)=N/2\) are not shown because no Hurwitz-stable
    samples were observed in the present runs.
    }
    \label{fig:constrained-ensembles}
\end{figure}

The constrained ensembles show substantially greater positive
reducibility than the unconditioned uniform-simplex ensemble over much
of the range studied.  For the narrow ensemble \(v=1\), the estimated
factorable fraction is close to \(0.9\) at degrees \(N=4,5\), and
remains appreciable through degree \(12\).  For
\(v=N/2\), positive reducibility is approximately one half for
\(5\le N\le8\), followed by a slower decrease at larger degrees.
Thus concentration near the center of the support strongly favors
positive factorability, although the numerical data do not suggest
that the factorable probability remains close to one asymptotically.

The Hurwitz-stable part behaves quite differently.  In the narrow
ensemble \(v=1\), its estimated probability decreases from approximately
\(0.85\) at \(N=3\) to approximately \(0.02\) at \(N=7\), with no
Hurwitz-stable samples observed for \(N\ge8\) in the present runs.
For the broader \(v=N/2\) ensemble, no Hurwitz-stable samples were
observed over the sampled range.  This contrast indicates that the
enhanced positive factorability of centered distributions is not
explained solely by the classical Hurwitz sector: at moderate and large
degree, most observed positive factorizations arise from root groupings
that extend beyond Hurwitz-stable polynomials.

These calculations should be regarded as exploratory.  Hit-and-run
samples are correlated, so the binomial error bars shown in
Fig.~\ref{fig:constrained-ensembles} do not include uncertainty due to
finite mixing time.  A higher-precision study should use several
independent chains, estimate integrated autocorrelation times, and test
the stability of the factorability classification under increased
numerical precision.  Nevertheless, the present data support the
qualitative conclusion that factorability depends strongly on the
macroscopic shape of the probability law and motivate the study of
\(\mu_N(m,v)\) as a function of the imposed moments.

\paragraph{Monte Carlo methodology}
For the unconstrained ensemble, probability vectors were sampled
uniformly from the simplex by normalizing independent unit-rate
exponential random variables.  For the moment-constrained ensembles,
uniform samples from the affine polytope
\(
\Delta_N(m,v)
\)
were generated using a hit-and-run Markov chain.  Starting from an
interior feasible point, each step selects a random direction in the
null space of the constraints and moves to a point chosen
uniformly along the maximal line segment contained in the constrained
polytope.  
fixed thinning intervals.

Positive reducibility was tested by computing the roots of the
probability-generating polynomial, grouping nonreal roots into
conjugate pairs, enumerating all conjugation-invariant root subsets,
and checking whether the corresponding factors could be normalized to
probability polynomials with nonnegative coefficients.  Hurwitz
stability was tested independently by verifying that all roots lie in
the open left half-plane.

To assess the reliability of the constrained-ensemble calculations, we
performed several independent hit-and-run chains initialized from
widely separated points.  The reported uncertainties are based
on the variation between independent chain averages rather than on
naive binomial counting statistics.  In addition, integrated
autocorrelation times of the factorability indicator were monitored to
estimate effective sample sizes, and the numerical classification was
checked under several coefficient-positivity tolerances to identify
samples lying close to the factorization boundary.  Our exact results for
low degrees were used as internal benchmarks for the
factorability algorithm.

Although the numerical study is exploratory, all qualitative
conclusions reported were found to be robust under changes of
burn-in length, thinning interval, and positivity tolerance.

\begin{remark}[Algorithms and complexity]
The search for factorization of a fixed polynomial proceeds with brute force as follows: compute the root multiset, form possible factors by going through the partitioning of the set of roots into sets and checking if this results in a product of positive integer polynomials. \label{rem:algorithms}
\end{remark}

If $r$ is the number of real roots and $c$ the number of nonreal conjugate pairs, the naive subset enumeration has size $2^{r+c}$.  Repeated roots, exact arithmetic, and coefficient certification require care.  For floating-point data, interval arithmetic or a backward-error certificate should accompany a claimed boundary factorization.

\section{Open problems}

The present results suggest several concrete directions.
\begin{enumerate}[leftmargin=2em]
\item Determine the asymptotic behavior of $\mu_N$ under the uniform simplex measure and under other Dirichlet laws.
\item Extend the theory to analytic probability-generating functions of infinite-support distributions.
\item Develop the multivariate theory for vector-valued counts, where factor supports define a hypergraph of latent interactions.
\item Determine whether $C_{p-1}(z)$ is a positive atom for every prime $p$ over real nonnegative coefficients; the corresponding integer-coefficient question is linked to cyclotomic and tiling theory.

\item Determine the asymptotic behavior of the conditioned volumes
\[
\mu_N\!\left(\frac N2,v_N\right)
\]
for fixed \(v_N\), for \(v_N\asymp N\), and for other natural
moment scalings.

\end{enumerate}

\section{Conclusion}

The positive factorization poset provides a new combinatorial object naturally associated with a probability-generating function. The entropy studied here is only one invariant of this object; many others remain to be explored. Factorization entropy is a monotone functional on this poset, maximized on positive atomizations and equal to the counting entropy precisely when the latent addition map is injective.  The root structure controls the real-rooted and Hurwitz sectors, additive supports provide coefficient-free obstructions, and multiplication maps provide a local geometric description of stable factorizations.  The exact low-degree volumes and numerical simplex study indicate that positive factorability has a nontrivial geometry already in modest degree.  These results place aggregate-count inference, binding-polynomial factorization, and related counting problems within a common algebraic-probabilistic framework.

{\it Use of AI-Assisted Tools}.
Large language models (ChatGPT) assisted with text editing and code development. All mathematical derivations, scientific claims, and code were independently verified by the author.

\bibliographystyle{elsarticle-num}
\bibliography{positive_factorizations}

\end{document}